\newcommand{\bG}{\bm{G}}
\newcommand{\bW}{\bm{W}}
\newcommand{\bX}{\bm{X}}
\newcommand{\bY}{\bm{Y}}
\newcommand{\bZ}{\bm{Z}}
\newcommand{\cov}{\text{Cov}}
\newcommand{\cD}{\mathcal{D}}
\newcommand{\cF}{\mathcal{F}}
\newcommand{\cI}{\mathcal{I}}
\newcommand{\cU}{\mathcal{U}}
\newcommand{\cY}{\mathcal{Y}}
\newcommand{\uG}{\underline{\bG}}
\newcommand{\uZ}{\underline{\bZ}}
\newcommand{\uY}{\underline{\bY}}
\newcommand{\normal}{\mathcal{N}}
\newcommand{\Id}{\mathrm{I}}
\newcommand{\psd}{\mathbb{S}_+}
\DeclareMathOperator{\gtr}{tr}
\newcommand{\ex}[1]{\ensuremath{\mathbb{E}\left[ #1\right]}}
\newcommand{\diag}{\ensuremath{\text{diag}}}
\DeclareMathOperator{\MMSE}{\mathsf{MMSE}}
\let\originalleft\left
\let\originalright\right
\renewcommand{\left}{\mathopen{}\mathclose\bgroup\originalleft}
\renewcommand{\right}{\aftergroup\egroup\originalright}
\newcommand{\dd}{\mathrm{d}}
\newcommand{\eq}{\begin{equation*}}
\newcommand{\en}{\end{equation*}}
\newcommand{\ea}{\begin{eqnarray*}}
\newcommand{\an}{\end{eqnarray*}}
\newcommand{\ben}{\begin{equation}}
\newcommand{\een}{\end{equation}}
\newtheorem{theorem}{Theorem}
\newtheorem{lemma}[theorem]{Lemma}
\theoremstyle{definition}
\newtheorem{assumption}{Assumption}
\begin{document}

\title{Mutual Information in Community Detection with Covariate Information and Correlated Networks
}

\author{Vaishakhi Mayya \and Galen Reeves}

\maketitle

\begin{abstract}

We study the problem of community detection when there is covariate information about the node labels and one observes multiple correlated networks. We provide an asymptotic upper bound on the per-node mutual information  as well as a heuristic analysis of a multivariate performance measure called the MMSE matrix. These results show that the combined effects of seemingly very different types of information can be characterized explicitly in terms of formulas involving low-dimensional estimation problems in additive Gaussian noise. Our analysis is supported by numerical simulations. 
\end{abstract}

\section{Introduction }\label{intro}
Networks model relational data between various nodes,  e.g., friendship networks in schools or social media. The community detection problem aims to classify the nodes of a network based on those relationships into various communities. The stochastic block model (SBM) is a generative model for a network where each node belongs to exactly one of $k$ communities and the probability of an edge between two nodes is exclusively a function of their community memberships \cite{holland1983stochastic}. In this setting, the goal of community detection is to recover the community labels from the observed network.

 A recent line of work has studied the information-theoretic limits of recovery. Most of this work has focused on either the two-community SBM \cite{decelle2011inference,deshpande:2017,caltagirone:2018, lelarge:2018,barbier:2016a,lesieur:2017,krzakala:2016,deshpande:2018} or the so-called $k$-community symmetric SBM \cite{lesieur:2017, abbe:2018a, banks:2016a,abbe:2018}. 
 In all of these cases,  performance is summarized in terms of a single numerical value, which is often referred to as the effective signal-to-noise ratio of the problem. General SBMs have been considered by Abbe and Sandon~\cite{abbe:2018a} who characterize conditions for weak recovery,  Lesieur et al.~\cite{lesieur:2017} who analyze the performance of an approximate message passing algorithm, and  Reeves et.\ al \cite{rmv2019} who study the asymptotic per-node mutual information and  MMSE in degree-balanced SBMs. 

The contribution of this paper is to extend the analysis in \cite{rmv2019} to the setting where one observes:
\begin{enumerate}
\item covariate information about the node labels; and
    \item multiple networks that are conditionally independent given the same underlying node labels. 
\end{enumerate}
Section~\ref{sec:background} gives the problem formulation and describes connections with previous work.  Section~\ref{sec:theory} provides the main theoretical results, which are upper bounds on mutual information. Numerical simulations are provided in Section~\ref{sec:simulation}. 

\textit{Notation}: We use  $\mathbb{S}^d$, $\mathbb{S}_+^d$ to denote the space of $d \times d$ symmetric matrices and symmetric positive semi-definite matrices, respectively. Given a  positive semi-definite matrix $S$, we use $S^{1/2}$ to denote the unique positive semi-definite square root. Given matrices $A, B \in \mathbb{S}^d$, the relation $A \succeq B$ means that $A - B \in \mathbb{S}_+^d$.

\section{Problem Formulation and Related Work}\label{sec:background}

\subsection{Node labels and covariate information}\label{sec:model}

The labels and covariate information associated with a collection of $n$ nodes are modeled in terms of an i.i.d.\ sequence of  tuples $\{(X_i, Y_i, \tilde{Y}_i)\}_{i=1}^n$ where $X_i$ is the unknown node label and $(Y_i, \tilde{Y}_i)$ is observed covariate information associated with the $i$-th node. 

We focus on the problem of community detection where each label takes exactly one of $k$ values with probability vector $p = (p_1, \dots, p_k)$. Without loss of generality these labels can be embedded into finite dimensional Euclidean space. To facilitate the exposition of our results, we use the whitened representation described in \cite{rmv2019}, where the labels are supported on a set of $k$ points in $\{\mu_1, \dots, \mu_k\}$ in $\mathbb{R}^{k-1}$ with the property that 
\begin{align}
\sum_{a=1}^k p_a \mu_a = 0, \qquad \sum_{a=1}^k p_a \mu_a \mu_{a}^T  = I. \label{eq:mu_white}
\end{align}
A unique specification of this whitened representation is described in \cite[Remark~1]{rmv2019}.

There are two types of the covariate information. The terms $Y_i$ are supported on a set $\cY$ and are used to model general information about the nodes. The terms $\tilde{Y}_i$ correspond to the output of linear Gaussian channel described by
\begin{align}
Y_i = S^{1/2} X_i + N_i,
\end{align}
where $S \in \psd^{k-1}$ is known and $N_i \sim \normal(0,I_{k-1})$ is independent Gaussian noise. These terms play a fundamental role in our proof technique. 

Furthermore, we define the information function $\cI(S) : \psd^{k-1} \to \mathbb R$ and MMSE function $M(S) : \psd^{k-1} \to \psd^{k-1}$ according to
\begin{align}
    \cI(S)  &\triangleq I(X_1; Y_1,  \tilde Y_1)  \label{eqn:I_gen}\\
    M(S) & \triangleq \ex{ \cov(X_1\mid Y_1, \tilde Y_1)},
\end{align}
where $S$ appears in the definition of $\tilde{Y}_i$. The matrix version of the I-MMSE relation \cite{reeves2018mutual} states that
\[
\nabla_S \, \cI(S) = \frac{1}{2} M(S).
\]

Finally, the collection of node labels is represented by an $n \times (k-1)$ matrix $\bX = (X_1, \dots X_n)^T$. Similarly, the covariate information is denoted by matrices $\bY = (Y_1, \dots, Y_n)^T$ and $\tilde{\bY} = (\tilde{Y}_1, \dots, \tilde{Y}_n)^T$ with $\uY = (\bY, \tilde{\bY})$.

\subsection{Correlated networks}

 We consider the setting where one observes multiple networks $\bG_1, \dots, \bG_L$ that are conditionally independent given the labels $\bX$. Each network  is represented by an $n \times n$ binary adjacency matrix $\bG_\ell = (G^\ell_{ i j})$ where $G^\ell_{ij} = G^\ell_{ji} = 1$ if there is an edge between nodes $i$ and $j$ and zero otherwise. Following \cite{rmv2019}, each network is drawn according to a degree-balanced SBM of the form 
\begin{align}\label{eq:GBern}
   G^\ell_{ij} \sim \text{Ber}\left(\frac{d_\ell}{n} + \frac{\sqrt{d_\ell ( 1- d_\ell/n) }}{n} X_i^T R_\ell X_j\right ), \quad i< j,
\end{align}
where $d_\ell$ is a positive real number that parameterizes the expected degree of each node in the network and $R_\ell$ is a symmetric $(k-1) \times (k-1)$ matrix that describes the relationship between the community labels and the probability of an edge. We assume  that the parameters $(d_\ell, R_\ell)$ are known and we use $\uG = (\bG_1, \dots, \bG_L)$ to denote the collection of networks.

\subsection{Multivariate performance metric}

The ability to recover the labels $\bX$ from the observations $(\uY, \uG)$ is assessed in terms of the MMSE matrix:

\begin{align} \label{eqn:MMSE_def}
\MMSE(\bX \mid \uY, \uG) \triangleq \frac{1}{n} \sum_{i=1}^n \ex{ \cov(\bX \mid \uY, \uG)},
\end{align}
where the expectation is taken with respect to $(\uY, \uG)$.  By the matrix I-MMSE relation \cite{reeves2018mutual}, this matrix can also be expressed as the gradient of the mutual information with respect to the matrix SNR:
 \[
 \MMSE(\bX \mid \uY, \uG) = 2 \nabla_S   I(\bX; \uY, \uG).
 \]
Moreover, by the data processing inequality for covariance and the assumption that the rows of $\bX$  drawn from the whitened representation, 
$ 0 \preceq \MMSE(\bX \mid \uG,\uY) \preceq  \Id_{k-1}.$

Notice that in the absence of network observations $\uG$, the problem of estimating $\bX$ from the covariate information $\uY$ decouples into $n$ independent problems and we have:
\begin{align}
\frac{1}{n} I(\bX; \uY) &= \cI(S)\\
\MMSE(\bX\mid  \uY) &= M(S).
\end{align}
These terms involve $(k-1)$-dimensional integrals that can be approximated numerically for small values of $k$.  The problem of estimating the node labels in the presence of network observations is more difficult to analyze because the networks induce dependence in the conditional distribution of the labels.

\subsection{Relation to prior work}

A great deal of recent work has used ideas from from information theory and statistical physics to characterize the information-theoretic limits of community detection (from a single network) as well as the performance of computationally efficient methods, \cite{decelle2011inference,abbe2015detection,deshpande:2017,caltagirone:2018, lelarge:2018,barbier:2016a,lesieur:2017,krzakala:2016,deshpande:2018,rmv2019,moore2017computer, abbe2017community}.  Much of this work has focussed on the weak recovery problem, which requires that the community labels are estimated with a mean-squared error that is strictly better than that of random guessing \cite[Chapter 4]{abbe2017community}.  On the algorithmic side, it has been shown that weak recovery is possible using polynomial time algorithms provided that the matrix $R$ has at least one eigenvalue with magnitude greater than one~\cite{bordenave2015non, krzakala2013spectral, abbe2015detection}. This condition is sometimes referred to as the Kesten-Stigum (KS) threshold. The information-theoretic limits describe the optimal performance that can be attained without any constraints on computational complexity. For network models with $k \ge 4$ communities \cite{decelle2011inference,lelarge:2018,coja2018information} or asymmetries \cite{lelarge:2018,rmv2019}, there exists a computational-to-statistical gap, where weak recovery below the Kesten-Stigum threshold is information-theoretically possible, even though all known polynomial-time algorithms fail in this regime.

The main results of this paper apply to the so-called dense network setting where the expected degree  $d$ of each node in the network increases with the problem dimension $n$. In this setting, previous work has provided bounds on the  asymptotic minimum mean-squared error of estimating the community labels \cite{deshpande:2017,caltagirone:2018, lelarge:2018,barbier:2016a,lesieur:2017,krzakala:2016,deshpande:2018,rmv2019}. The analysis in this paper builds upon the recent work in \cite{rmv2019}, which shows that the mutual information and MMSE in a degree balanced SBM can be characterized in terms of a matrix of effective signal-to-noise ratios.

The impact of covariate information (also known as side information) has been studied previously \cite{newman2016structure,cai2016inference,stegehuis2019efficient,kanade2016global,mossel2016local, ver2014phase,zhang2014phase,binkiewicz2017covariate,saad2018community,deshpande2018contextual}.  In some cases \cite{stegehuis2019efficient,kanade2016global,zhang2014phase} it has been shown that relatively small amount of node-wise covariate information can have a large impact on performance and also significantly reduce the computational-to-statistical gap. Much of the theoretical analysis \cite{cai2016inference,stegehuis2019efficient,kanade2016global,mossel2016local, ver2014phase,zhang2014phase}   has focussed on the 2-community or symmetric SBM. A contribution of this paper is to consider the larger class of degree balanced SBMs.

There has also been some recent work on community detection with multiple correlated networks~\cite{levin2017central,arroyo2019inference}, which focuses on scaling regimes where the  eigenvalues of $R_\ell$ scale with size of network $n$, and thus the ability to detect communities improves as  $n$ goes to infinity. In contrast, this paper  focuses on the setting where $R_\ell$ is a constant  and thus the mean-squared error is non-vanishing. To the best of our knowledge, the information theoretic limits  for  community detection with multiple correlated networks have not been addressed.

\section{Formulas for Mutual information and MMSE}\label{sec:theory}

\subsection{Upper bound on the mutual information}

Our analysis focuses on a sequence of problem settings where the number of nodes $n$ scales to infinity. We assume that node labels and covariate information are drawn i.i.d.\ according to the distribution on  $(X_1, Y_1, \tilde{Y_1})$ and the matrices $\{R_\ell\}$ are fixed. We make two additional assumptions. 

\begin{assumption}[Diverging Average Degree]\label{assump:lin_deg} The average degree of each network $d_\ell$ increases with $n$ such that both $d_\ell$ and $(n-d_\ell)$ tend to infinity. 
\end{assumption}

\begin{assumption}[Definite Matrix]\label{assump:defR} Each matrix  $R_\ell$ is either positive definite or negative definite.  
\end{assumption}

Our results are stated in terms of a potential function. 
Let $\cU = \{ U \in \psd^{k-1} \, : \, U \preceq I\}$ and let  $\cF: \cU  \to [0, \infty)$ be defined as 
\begin{align}\label{eqn:fpmain}
\cF(U) & \triangleq \cI\Big(S + \sum_{\ell=1}^L R_\ell (I - U) R_\ell \Big) + \frac{1}{4} \sum_{\ell=1}^L \gtr( (R_\ell U)^2). 
\end{align}

The following result provides an asymptotic upper bound on the per-node mutual information between $\bX$ and the observations $(\uY, \uG)$. The proof is given in Section~\ref{thm:I_UB:proof}.   

\begin{theorem}\label{thm:I_UB}
Under Assumptions~\ref{assump:lin_deg} and \ref{assump:defR},  
\begin{align}\label{eqn:I_main}
\limsup_{n \to \infty} \frac{1}{n} I(\bX ; \uY, \uG) \le  \min_{U \in \cU} \cF(U). 
\end{align}
\end{theorem}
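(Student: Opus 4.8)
The plan is to reduce the networks to a Gaussian model and then run a Guerra--Toninelli interpolation, following and extending the scheme of \cite{rmv2019}.

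\textbf{Step 1 (Gaussian reduction).} First I would replace each Bernoulli network $\bG_\ell$ by an equivalent spiked Gaussian matrix. Under Assumption~\ref{assump:lin_deg} the scaling chosen in \eqref{eq:GBern} is exactly the one for which a central-limit/universality argument (as developed in \cite{rmv2019}) shows that, to leading order in $n$, the per-node mutual information carried by $\bG_\ell$ agrees with that of the observation $\bW_\ell = \frac{1}{\sqrt n}\,\bX R_\ell \bX^T + \bZ_\ell$, where $\bZ_\ell$ is a symmetric Gaussian noise matrix. Since the networks are conditionally independent given $\bX$, this reduction applies coordinatewise across $\ell$, and it suffices to bound $\frac1n I(\bX;\uY,\uW)$ where $\uW=(\bW_1,\dots,\bW_L)$. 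The covariate information enters only through the single-letter channel $\cI$, because $\cI(S)=I(X_1;Y_1,\tilde Y_1)$ already bundles $Y_1$ together with the Gaussian observation $\tilde Y_1$.

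\textbf{Step 2 (interpolation).} Fix a trial matrix $U\in\cU$ and an interpolation parameter $t\in[0,1]$. I would construct a family of observation models in which the matrix channels $\bW_\ell$ are attenuated to signal strength $t$ while a node-wise Gaussian side channel of matrix SNR $S+(1-t)\sum_{\ell}R_\ell(\Id-U)R_\ell$ is simultaneously switched on. Because $U\preceq\Id$ we have $R_\ell(\Id-U)R_\ell\succeq0$, so the side-channel SNR remains in $\psd^{k-1}$ for every $t$; this is where the definition of $\cU$ is used. Writing $i(t)$ for the resulting per-node mutual information, the endpoints are $i(1)=\frac1n I(\bX;\uY,\uW)$ (the full problem) and $i(0)=\cI\big(S+\sum_\ell R_\ell(\Id-U)R_\ell\big)$ (a fully decoupled single-letter problem), the latter being the first term of $\cF(U)$.

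\textbf{Step 3 (derivative, sign, integration).} Differentiating $i(t)$ and applying the matrix I-MMSE relation to each channel expresses $i'(t)$ as a matrix-channel MMSE term minus a side-channel MMSE term. Introducing two conditionally i.i.d.\ posterior replicas $\bx^{(1)},\bx^{(2)}$ with overlap $Q=\frac1n(\bx^{(1)})^T\bx^{(2)}$ and using a Nishimori identity, I expect these contributions to combine, after completing the square in $Q$, into
\[
i'(t) \;=\; \frac14\sum_{\ell=1}^L \gtr\big((R_\ell U)^2\big)\;-\;\frac14\sum_{\ell=1}^L \ex{\big\langle \gtr\big(R_\ell (Q-U) R_\ell (Q-U)\big)\big\rangle}\;+\;o(1).
\]
Here Assumption~\ref{assump:defR} is decisive: if $R_\ell$ is definite we may write $R_\ell=\pm B_\ell^2$ with $B_\ell$ symmetric, whence $\gtr\big(R_\ell A R_\ell A\big)=\gtr\big((B_\ell A B_\ell)^2\big)\ge 0$ for every symmetric $A$, so the remainder sum is non-positive. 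Consequently $i'(t)\le \frac14\sum_\ell \gtr((R_\ell U)^2)+o(1)$ uniformly in $t$, and integrating over $[0,1]$ (the constant first term is unchanged by integration) gives $\frac1n I(\bX;\uY,\uW)=i(1)=i(0)+\int_0^1 i'(t)\,dt\le\cF(U)+o(1)$. Letting $n\to\infty$, minimizing over $U\in\cU$, and transferring the bound back to $\uG$ via Step~1 yields \eqref{eqn:I_main}.

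\textbf{Main obstacle.} The conceptual crux is the sign of the interpolation remainder. Because the overlap $Q$ and the SNR are matrix-valued, the quadratic form $\gtr\big(R_\ell(Q-U)R_\ell(Q-U)\big)$ is a genuine square only when $R_\ell$ is semidefinite; this is precisely the content of Assumption~\ref{assump:defR}, and without it the Guerra bound can fail. The most technically demanding part is Step~1: one must make the Gaussian approximation quantitative and show the error is $o(1)$ uniformly along the interpolation, so that the bound proved in the Gaussian model genuinely transfers to the Bernoulli networks.
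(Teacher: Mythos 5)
Your proposal is correct and arrives at the paper's bound, but the second half of your argument takes a genuinely different route. Your Step 1 is the paper's Lemma~\ref{lem:universality}: there the reduction to the Gaussian model is made precise by a telescoping hybrid argument that swaps one $\bG_\ell$ at a time for its Gaussian surrogate $\bZ_\ell$, conditioning on $\cD_\ell = (\bG_1,\dots,\bG_{\ell-1},\bZ_{\ell+1},\dots,\bZ_L)$ and invoking \cite[Theorem~6]{rmv2019} -- exactly the per-$\ell$ decomposition you sketch. From there the paths diverge. You run a classical Guerra--Toninelli interpolation: a trial $U \in \cU$, a side channel of SNR $S + (1-t)\sum_\ell R_\ell(\Id - U)R_\ell$, and a completed square in the replica overlap, dropping the remainder by definiteness. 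The paper never constructs such a coupled model; it instead studies the two-parameter surface $\cI(S,t) = \frac1n I(\bX;\uY,\uZ)$, proves the differential inequality $\partial_t \cI(S,t) \le \frac14 \sum_\ell g_\ell(2\nabla_S \cI(S,t))$ (Lemma~\ref{lem:grad_inq}, via \cite[Lemma~11]{rmv2019}), and extracts the variational bound by convex duality, bounding $\cI(S,1)$ through the conjugate $\cI^*(U)$ and using that the concave $\cI$ equals its biconjugate. These are two faces of the same mechanism -- your interpolation path is the characteristic along which the paper's differential inequality gets integrated, and both hinge on the identical use of Assumption~\ref{assump:defR}: writing $R_\ell = \pm B_\ell^2$ with $B_\ell$ symmetric gives $\gtr(R_\ell A R_\ell A) = \gtr\left((B_\ell A B_\ell)^2\right) \ge 0$ for symmetric $A$, which you correctly identify as the crux. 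What each buys: your route is more self-contained and makes the sign structure transparent; the paper's route reuses the machinery of \cite{rmv2019} wholesale, and its duality step (minimizing $\cI^*(U) + \frac12\sum_\ell \gtr((S + R_\ell(\Id-\tilde U)R_\ell)U)$ over $U$) is exactly the algebra your completed square performs implicitly.

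Two minor corrections to your sketch. First, with the standard replica overlap $Q = \frac1n (\bx^{(1)})^T \bx^{(2)}$, which tends to $\Id$ at high SNR under the whitening \eqref{eq:mu_white}, the square is completed around $\Id - U$, not $U$: the remainder is $-\frac14\sum_\ell \ex{\left\langle \gtr\left(R_\ell\left(Q - (\Id - U)\right)R_\ell\left(Q - (\Id-U)\right)\right)\right\rangle}$, and your formula holds only if your $Q$ denotes the co-overlap $\Id$ minus this; since $U$ tracks the MMSE matrix, the bookkeeping is cosmetic but should be stated. Second, your ``main obstacle'' overstates the universality requirement: no uniformity along the interpolation is needed, because the interpolation lives entirely inside the Gaussian model. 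Universality is invoked exactly once, at the endpoint, to compare $I(\bX;\uY,\uG)$ with $I(\bX;\uY,\uZ)$ -- which is precisely what Lemma~\ref{lem:universality} does. Finally, your $o(1)$ in $i'(t)$ corresponds to the paper's $\delta_\ell = \frac{1}{n^2}\gtr\left(\ex{(R_\ell(\bX^T\bX - I))^2}\right)$, which vanishes by the law of large numbers under the whitened prior.
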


Theorem~\ref{thm:I_UB} provides an extension of \cite{rmv2019}, which focused on the setting of a single network ($L=1$) without the covariate information provided by $\bY$. In this setting, \cite[Theorem~1]{rmv2019} shows that the upper bound is asymptotically tight when $S = 0$, that is 
\begin{align}
\lim_{n \to \infty} \frac{1}{n}I(\bX; \uG) = \min_{U \in \cU} \left\{  \cI(  R (I - U) R) + \frac{1}{4} \gtr( (R U)^2) \right\}.
\end{align}

\subsection{Partially revealed labels} 
\label{sec:partial_labels}
As a specific example of covariate information, consider the setting where a fraction of the true node labels are revealed. This is also referred to as the semi-supervised setting \cite{zhang2014phase}. Using the setup introduced in Section~\ref{sec:model}, partially revealed labels can be modeled using an erasure channel, where $Y_i$ is equal to $X_i$ with probability $\alpha$ and is equal to an erasure symbol with probability $1-\alpha$. In this setting, the mutual information function is given by
\begin{align}
\cI(S)  = \alpha H(X_1) + (1- \alpha) I(X_1; \tilde{Y}_1)
\end{align}
where $H(X_1) = \sum_{a=1}^k - p_a \log p_a$ is the entropy of the community labels.

\subsection{Heuristic analysis of MMSE matrix}

The MMSE matrix is related to the mutual information via the matrix I-MMSE relation~\cite{reeves2018mutual}, which implies
\begin{align*}
I(\bX; \uY, \uG)  - I(\bX; \bY, \uG) 
&
= \frac{n}{2} \int_0^1 \gtr\left( \MMSE(\bX \mid \uG, \uY) \Big\vert_{S = S_\gamma}  \frac{\dd}{ \dd \gamma} S_\gamma \right) \, \dd \gamma,
\end{align*}
for any differentiable path $S_\gamma$ with $S_0 = 0 $ and $S_1 = S$. Following the approach outlined in \cite[Appendix A.3]{rmv2019}, it can be shown that upper and lower bounds on the asymptotic per-node mutual information lead to asymptotic bounds on the MMSE matrix. In particular, for the special case of a single network without covariate information,  \cite[Theorem~3]{rmv2019} shows that, for any positive definite $S$, 
\[
\MMSE(\bX \mid  \tilde{\bY},  \bG) \preceq U^* + o_n(1),
\]
where $U^*$ is any minimizer of $\cF(U)$ and $o_n(1)$ denotes a sequence of symmetric matrices that converges to zero in the large-$n$ limit. 

Our next result follows a similar approach for the setting of multiple networks and covariate information. This result requires the additional assumption that the upper bound on the mutual information in Theorem~\ref{thm:I_UB} is asymptotically tight for $S = 0$. Because this assumption is unproven, the resulting upper bound is considered to be heuristic.

\begin{theorem}\label{thm:AUB}
Consider Assumptions 1 and 2. If the upper bound in Theorem~\ref{thm:I_UB} is asymptotically tight at $S = 0$, that is
\begin{align*}
\lim_{n \to \infty} \frac{1}{n}I(\bX; \uG,\bY) = 
 &\min_{U \in \cU} \left\{  \cI\Big( \sum_{\ell=1}^L R_\ell (I - U) R_\ell \Big) + \frac{1}{4} \sum_{\ell=1}^L \gtr( (R_\ell U)^2) \right\}
\end{align*}
then, for any positive definite $S$, the MMSE matrix satisfies
\begin{equation} \label{eqn:mmseaub}
\MMSE(\bX \mid \uG,\uY) \preceq U^* + o_{n}(1), 
\end{equation}
where $U^*$ is any minimizer of $\cF(U)$ and $o_n(1)$ denotes a sequence of symmetric matrices that converges to zero in the large-$n$ limit.
\end{theorem}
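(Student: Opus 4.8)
The plan is to obtain the matrix inequality from two instances of the I--MMSE relation: the microscopic one recorded in the excerpt, $\MMSE(\bX\mid\uY,\uG)=2\nabla_S I(\bX;\uY,\uG)$, and a macroscopic one that I will establish for the optimized potential $\cF^*(S):=\min_{U\in\cU}\cF(U)$, where I now make the dependence of $\cF$ on $S$ explicit. Writing $\cI_n(S):=\tfrac1n I(\bX;\uY,\uG)$, the target $\MMSE\preceq U^*+o_n(1)$ is precisely the gradient comparison $\nabla_S\cI_n(S)\preceq\nabla_S\cF^*(S)+o_n(1)$, once I show $2\nabla_S\cF^*(S)=U^*$. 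Both $\cI_n$ and $\cF^*$ are concave in $S$ (mutual information is concave in the matrix SNR, and $\cF^*$ is a pointwise minimum over $U$ of the maps $S\mapsto\cI(S+\sum_\ell R_\ell(I-U)R_\ell)+\tfrac14\sum_\ell\gtr((R_\ell U)^2)$, each concave in $S$), and this concavity, combined with Theorem~\ref{thm:I_UB} and the tightness hypothesis, is the structure I will exploit.

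First I would compute $\nabla_S\cF^*$. By the envelope theorem, differentiating only the explicit $S$-dependence at a minimizer $U^*$ gives $\nabla_S\cF^*(S)=(\nabla\cI)\big(S+\sum_\ell R_\ell(I-U^*)R_\ell\big)=\tfrac12 M\big(S+\sum_\ell R_\ell(I-U^*)R_\ell\big)$, using the low-dimensional relation $\nabla_S\cI(S)=\tfrac12 M(S)$. Next I would write the stationarity condition $\nabla_U\cF(U^*)=0$: the chain rule through $U\mapsto S+\sum_\ell R_\ell(I-U)R_\ell$ together with the derivative of $\tfrac14\sum_\ell\gtr((R_\ell U)^2)$ yields $\sum_\ell R_\ell\big(U^*-M(S+\sum_\ell R_\ell(I-U^*)R_\ell)\big)R_\ell=0$. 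Here Assumption~\ref{assump:defR} is essential: since each $R_\ell$ is definite, the map $A\mapsto\sum_\ell R_\ell A R_\ell$ is strictly positive, because $\langle A,\sum_\ell R_\ell A R_\ell\rangle=\sum_\ell\gtr((A R_\ell)^2)\ge0$ with equality only at $A=0$; hence stationarity forces the fixed-point identity $U^*=M\big(S+\sum_\ell R_\ell(I-U^*)R_\ell\big)$. Combining the two displays gives $2\nabla_S\cF^*(S)=U^*$, matching the claimed right-hand side.

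To compare the gradients, I would integrate the matrix I--MMSE relation along the ray $S_\gamma=\gamma S$, $\gamma\in[0,1]$, whose endpoint $S_0=0$ makes $\tilde\bY$ pure noise so that $\cI_n(0)=\tfrac1n I(\bX;\bY,\uG)$. This gives $\cI_n(S)-\cI_n(0)=\tfrac12\int_0^1\langle\MMSE(\bX\mid\uY,\uG)|_{S=\gamma S},\,S\rangle\,\dd\gamma$, with the analogous identity $\cF^*(S)-\cF^*(0)=\tfrac12\int_0^1\langle U^*(\gamma S),\,S\rangle\,\dd\gamma$. Theorem~\ref{thm:I_UB} supplies $\cI_n(S)\le\cF^*(S)+o_n(1)$ for every $S$, while the tightness hypothesis at $S=0$ supplies $\cI_n(0)\ge\cF^*(0)-o_n(1)$; subtracting yields $\int_0^1\langle\MMSE|_{\gamma S}-U^*(\gamma S),\,S\rangle\,\dd\gamma\le o_n(1)$. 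To upgrade this to the pointwise matrix statement I would invoke that concavity of $\cI_n$ makes $\gamma\mapsto\MMSE|_{\gamma S}$ monotone decreasing in the positive-semidefinite order, and likewise for $U^*(\gamma S)$, so that the endpoint value at $\gamma=1$ can be controlled by the integral; to secure the ordering in every direction $D\succeq0$ rather than only the direction $S$, I would re-run the argument with the base SNR perturbed along $D$ and appeal to the multivariate I--MMSE relation, mirroring \cite[Appendix~A.3]{rmv2019}.

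The hard part is exactly this last upgrade. The ordering $\cI_n\le\cF^*$ together with asymptotic equality at $S=0$ does \emph{not} by itself force the pointwise gradient inequality $\nabla_S\cI_n\preceq\nabla_S\cF^*$: two ordered concave functions that agree at a point can have their gradients cross at interior points, so the passage from the integrated, single-direction inequality to the pointwise matrix inequality $\MMSE\preceq U^*+o_n(1)$ genuinely relies on the monotonicity of the MMSE matrix in the SNR and on the tightness being available in a controlled way along the whole path, not merely at $S=0$. This step cannot be made fully rigorous with the ingredients at hand, which is why the conclusion is stated as heuristic; I would carry it out at the level of rigor of \cite[Appendix~A.3]{rmv2019} and flag the dependence on the unproven tightness assumption explicitly.
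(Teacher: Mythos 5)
Your skeleton is the paper's skeleton: the paper offers no self-contained proof of Theorem~\ref{thm:AUB}, only the matrix I--MMSE path identity and a deferral to the method of \cite[Appendix~A.3]{rmv2019}, and your plan --- integrate the I--MMSE relation from $S_0=0$ to $S_1=S$, sandwich $\cI_n$ between the Theorem~\ref{thm:I_UB} upper bound and the assumed tightness at $S=0$, and identify $2\nabla_S \cF^*(S) = U^*$ --- matches it. Your envelope-theorem computation is a genuinely useful ingredient the paper leaves implicit: the observation that Assumption~\ref{assump:defR} makes $A \mapsto \sum_\ell R_\ell A R_\ell$ strictly positive (since $\gtr((AR_\ell)^2) = \gtr((Q_\ell A Q_\ell)^2) \ge 0$ for $R_\ell = \pm Q_\ell^2$), forcing the fixed point $U^* = M\big(S + \sum_\ell R_\ell(I-U^*)R_\ell\big)$, is correct, modulo the caveat that a minimizer on the boundary of $\cU$ satisfies only a KKT inequality rather than stationarity.

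The genuine gap is where you stop and declare the final step non-rigorizable. You are right that an ordered pair of concave functions touching at one point does not control gradients elsewhere, but you missed the two observations that close this on the paper's intended route. First, tightness at $S=0$ \emph{propagates to every} $S' \succeq 0$ by absorbing the Gaussian channel into the covariates: the pair $(Y_i, \tilde Y_i)$ at SNR $S'$ is itself an admissible covariate channel, and since two independent Gaussian observations at SNRs $S'$ and $S''$ are sufficient-statistics-equivalent to one at $S'+S''$, the augmented model has information function $\cI'(S'') = \cI(S'+S'')$; applying the tightness hypothesis to this augmented model gives $\lim_n \frac1n I(\bX;\uG,\uY)\big|_{S=S'} = \cF^*(S')$ for all $S'$. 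So one has pointwise convergence of concave functions everywhere along the path, not a one-point contact, and a Griffiths-type gradient argument is available. Second, the positive-semidefinite-order monotonicity of the MMSE matrix does \emph{not} follow from concavity (concavity only yields monotonicity of the scalar pairing along a ray); it follows from degradation: if $0 \preceq S_1 \preceq S_2$ then $S_1^{1/2}S_2^{-1}S_1^{1/2} \preceq I$, so $\tilde\bY$ at SNR $S_1$ can be simulated from $\tilde\bY$ at SNR $S_2$ plus independent noise, and data processing for conditional covariance gives $\MMSE\big|_{S_2} \preceq \MMSE\big|_{S_1}$. Combining the two: for positive definite $S$, any $D \succeq 0$, and $\epsilon$ small enough that $S - \epsilon D \succeq 0$ (this is exactly why the theorem requires $S \succ 0$),
\begin{align*}
\frac{\epsilon}{2}\, \gtr\big( \MMSE(\bX\mid\uG,\uY)\, D \big) \;\le\; \cI_n(S) - \cI_n(S-\epsilon D) \;\longrightarrow\; \cF^*(S) - \cF^*(S-\epsilon D) \;\le\; \frac{\epsilon}{2}\, \gtr\big( U^*(S-\epsilon D)\, D\big),
\end{align*}
and letting $\epsilon \downarrow 0$ yields \eqref{eqn:mmseaub} in every direction $D$. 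Your unproved side claim that $\gamma \mapsto U^*(\gamma S)$ is monotone in the semidefinite order is not needed in this argument. Finally, note that the paper calls the result heuristic solely because the tightness hypothesis is unproven, not because the gradient-comparison step resists rigor; your closing paragraph misattributes the source of the heuristic label.
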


\section{Simulation Results}\label{sec:simulation}

\subsection{Covariate information} \label{sec:covinf}

We first consider the effects of partially revealed labels in the setting of a single network observation.  Results are obtained on a problem with $n=10^5$ nodes and $k=3$ communities with probability vector  $p = (0.1, 0.3, 0.6)$. Conditional on the node labels, the network is drawn according to a degree-balanced SBM with average degree $d =30$ and $R = \diag(\lambda_1, \lambda_2)$. The covariate information in $\bY$ consists of the output of an erasure channel, as described in Section~\ref{sec:partial_labels}.

We compare our theoretical results  with the empirical performance of belief propagation (BP). For each problem setting, the MSE is estimated according to $\frac{1}{n} \sum_{i=1}^n \| X_i - \hat{X}_i\|^2,$ where $\hat{X}_i$ is the BP estimate of the $i$-th label. We note that this evaluation of the MSE differs slightly from much of the prior work, which focuses on uniform community assignments and includes an additional step that minimizes over all permutations of community labels. This additional step is not needed in our setting due to the non-uniformity in community sizes.

\begin{figure*}
\centering
\begin{subfigure}[b]{0.45\textwidth}
\includegraphics[width =\linewidth]{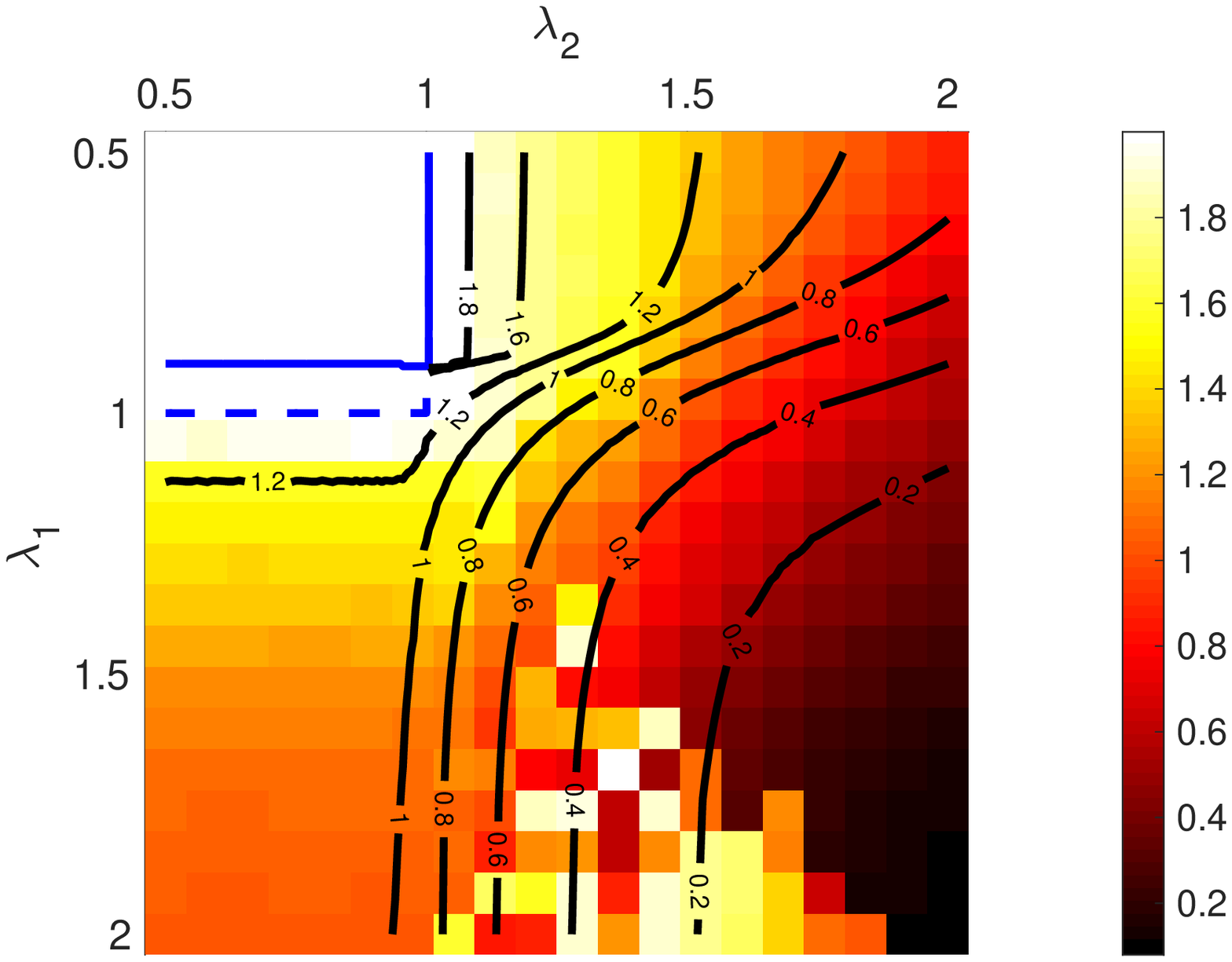}
\caption{No covariate information}
\end{subfigure}
\begin{subfigure}[b]{0.45\textwidth}
\includegraphics[width =\linewidth]{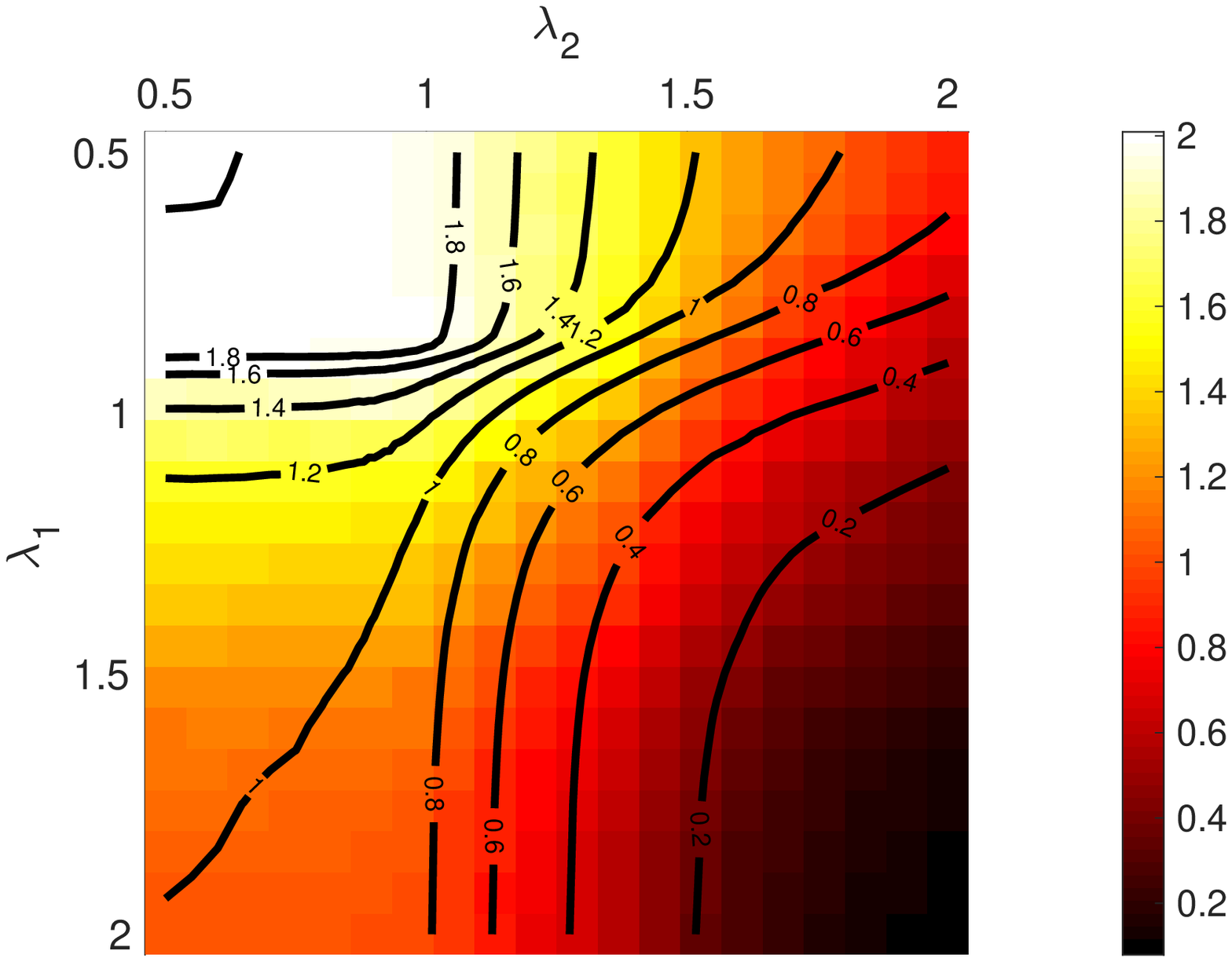}
\caption{$1\%$ of labels revealed}
\end{subfigure}
\caption{Comparison of the heuristic upper bound on $\gtr(\MMSE(\bX \mid \uY,\bG)$ given in Theorem~\ref{thm:AUB} (black contour lines) with the empirical MSE of BP (heat map). In the left panel, the solid blue line is the upper bound on the weak recovery threshold given in \cite[Theorem~5]{rmv2019} and the dashed blue line is the weak recovery threshold for acyclic BP \cite{abbe:2018a}.  
\label{fig:compare}}
\end{figure*}

Figure~\ref{fig:compare} provides a comparison of 
the heuristic upper bound on $\gtr(\MMSE(\bX \mid \uY,\bG)$ given in Theorem~\ref{thm:AUB}  and the empirical MSE of BP, where each pixel is the median of 8 independent trials. The axes correspond to the eigenvalues of $R$. Figure~\ref{fig:compare}(a) corresponds to the setting without covariate information and  Figure~\ref{fig:compare}(b) corresponds to the setting where $1\%$ of the labels are revealed.

Similar to previous work focusing on partially revealed labels~\cite{zhang2014phase,cai2016inference, saad2018community,stegehuis2019efficient}, Figure~\ref{fig:compare} shows  that a relatively small amount of extra information can provide significant performance gains. One of main takeaways from Figure~\ref{fig:compare} is that there is a close qualitative correspondence between the heuristic upper bound given in this paper and the empirical performance. 

Finally,  we note that there is a region in  Figure~\ref{fig:compare}(a) where BP becomes unstable. We suspect that this may be a consequence of asymmetries in the network model.

\subsection{Correlated networks} \label{sec:multnet}

Next, we consider the effects of multiple network observations.  Results are obtained for a problem with $n=10^4$ nodes and $k=3$ communities with non-uniform probability vector  $p = (0.1, 0.3, 0.6)$. Conditional on the labels, two networks are drawn according to the degree-balanced SBM with average degree $d = 30$ and $R_\ell = r I_2$.

In this setting, we found that the BP has convergence issues and so we compare our theoretical results with the empirical performance of a spectral method \cite{mmvr2019} applied to a linear combination of the adjacency matrices. Specifically, we obtain estimates of the community labels using the following procedure. First, we construct the average of the networks $\bG_1$ and $\bG_2$ according to 
\begin{align}
\tilde{\bG} = \frac{1}{\sqrt{2}}\bG_1 + \frac{1}{\sqrt{2}} \bG_2.
\end{align}
Note that the conditional expectation of $\tilde{\bG}$ given $\bX$ is comparable to that of a single network with $\tilde{R} = \sqrt{2 }r \Id$. Next, we retain the eigenvectors associated with the second and third leading eigenvalues in the spectral decomposition of $\tilde{\bG}$. The relationship between these eigenvectors and the node labels is characterized using a Gaussian mixture model (GMM) approach described in \cite{mmvr2019}, evaluated with $\tilde{R}$.

\begin{figure}
\centering
\includegraphics[width =0.5\linewidth]{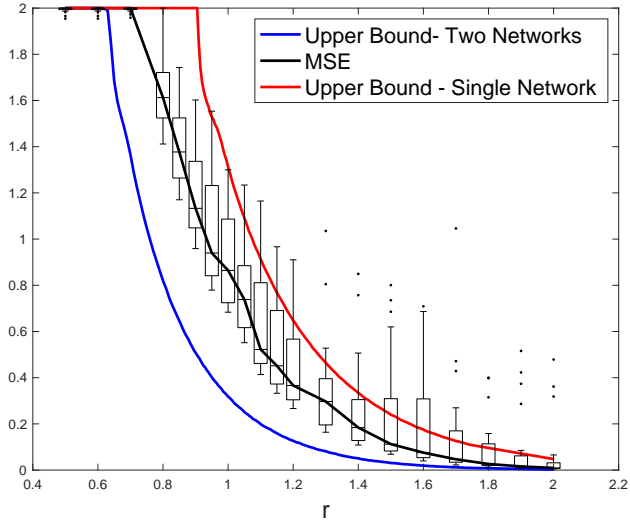}
\caption{
MSE as a function of the SBM parameter $r$.
}
\label{fig:cov_net}
\end{figure}

Figure~\ref{fig:cov_net} shows the MSE as a function of the SBM parameter $r$.  The solid blue line corresponds to the trace of the heuristic upper bound to the MMSE for two correlated networks computed from Theorem~\ref{thm:AUB}, and the red line corresponds to the upper bound for a single network.  The black line corresponds to the empirical observations using the method described in this section.  With multiple correlated networks, we see that the MSE shows an improvement in the presence of additional information, and our proposed asymptotic upper bound follows the observed performance.

 \section{Conclusion}

 In this work, we  study the information-theoretic limits of community detection with covariate information and correlated networks. Our main result (Theorem~\ref{thm:I_UB}) provides and upper bound on the mutual information between the community labels $\bX$ and the observed data, which consists of the collection of graphs $\uG$ and the side information $\uY$ when the model parameters are known. Next, we leverage the multivariate I-MMSE relationship to provide an upper bound on  MMSE in estimating the community labels (Theorem~\ref{thm:AUB}). This result requires the assumption that our upper bound in Theorem~\ref{thm:I_UB} is tight. Our theoretical analysis is supported by the numerical simulations demonstrating the impact of partially revealed labels (Section~\ref{sec:covinf}) and multiple correlated networks (Section~\ref{sec:multnet}).

An important direction for future work is to determine whether the upper bound given in Theorem~\ref{thm:I_UB} is tight.  We note that such a result has been established previously special case of a single network without side information \cite[Theorem 2]{rmv2019}.

\section{Proof of Theorem~\ref{thm:I_UB}}\label{thm:I_UB:proof}

The proof of Theorem~\ref{thm:I_UB} follows the approach in \cite{rmv2019} with appropriate modifications to handle the covariate information and multiple networks. The first step of the proof is to establish an asymptotic equivalence between the mutual information in the community detection problem and the mutual information in the symmetric matrix estimation problem defined by
\begin{align}
    \bW_\ell &= \frac {1}{\sqrt n}\bX R_\ell \bX^T  \label{eqn:W}\\
    \bZ_\ell &= \sqrt t \bW_\ell + \bm{\xi}_\ell \label{eqn:LRME},
\end{align}
where $\bm{\xi}$ is a symmetric matrix with $\xi_{ij}  \sim \normal(0,1)$ for $i<j$ and $\xi_{ii} \sim \normal(0,2)$. We use $\uZ = (\bZ_1, \dots, \bZ_L)$ to denote the collection of matrix observations.

 \begin{lemma}[Channel Universality]\label{lem:universality}
 Under Assumption~\ref{assump:lin_deg},
\begin{align}
    \lim_{n \to \infty} \frac{1}{n} \left \lvert I(\bX;\uY, \uG) - I(\bX;\uY, \uZ)\right \rvert = 0.
\end{align}
\end{lemma}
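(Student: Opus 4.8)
The plan is to prove this \emph{channel universality} result by showing that the Bernoulli network observations $\uG$ and the Gaussian matrix observations $\uZ$ carry asymptotically the same per-node information about $\bX$, conditional on the side information $\uY$. The key structural observation is that, for each network, the conditional mean of the rescaled adjacency matrix matches the Gaussian signal model \eqref{eqn:W}. Indeed, from \eqref{eq:GBern}, subtracting the constant $d_\ell/n$ and rescaling by $n/\sqrt{d_\ell(1-d_\ell/n)}$ produces a matrix whose conditional mean (given $\bX$) equals $\frac{1}{\sqrt n}\bX R_\ell \bX^T = \bW_\ell$, matching $\bZ_\ell$ at the appropriate SNR level $t$. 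Under Assumption~\ref{assump:lin_deg}, each off-diagonal Bernoulli entry has variance of order $d_\ell/n$, and after centering and rescaling these entries converge to unit-variance Gaussians in the relevant regime. The off-diagonal entries of the signal $\bW_\ell$ are of order $1/\sqrt n$, so each individual edge observation is a weak measurement, and it is precisely this weak-signal, diverging-degree regime in which a central-limit / Lindeberg-type replacement of Bernoulli noise by Gaussian noise becomes exact in the per-node limit.

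The main steps I would carry out, following the strategy of \cite{rmv2019}, are as follows. First, I would introduce an interpolating sequence of channels that morphs the Bernoulli observation model into the Gaussian model $\bZ_\ell = \sqrt{t}\,\bW_\ell + \bm{\xi}_\ell$, handling all $L$ networks and the side information $\uY$ jointly. Because the networks are conditionally independent given $\bX$ and $\uY$ is a further conditionally independent observation, the mutual information decomposes cleanly and one can interpolate network-by-network (or edge-by-edge) while conditioning on everything else. Second, I would bound the difference in mutual information along this interpolation using a Lindeberg replacement argument: replace each centered-rescaled Bernoulli entry $\bar G^\ell_{ij}$ by a Gaussian with matched first and second conditional moments, and control the error by a third-moment (or smoothness) term. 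Third, I would sum the per-edge errors over the $\binom{n}{2}$ entries and the $L$ networks and show that, after dividing by $n$, the total tends to zero. Since there are $O(n^2)$ edges per network, each contributing an error that must be $o(1/n^2 \cdot n) = o(1/n)$, the per-edge discrepancy needs to be of order $o(n^{-1})$; the diverging degree $d_\ell \to \infty$ together with $(n - d_\ell)\to\infty$ is exactly what drives the higher-moment corrections to vanish fast enough.

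The technical heart, and the step I expect to be the main obstacle, is the quantitative Lindeberg/Gaussian-smoothing estimate that bounds the per-edge contribution to $\frac{1}{n}|I(\bX;\uY,\uG) - I(\bX;\uY,\uZ)|$. The difficulty is twofold. First, mutual information is not a simple additive functional of the observations, so one cannot naively apply a scalar Lindeberg swap; instead one works with a suitably smooth surrogate (for instance, derivatives of the mutual information along the interpolation, expressed via the I-MMSE relation, or bounds on the Kullback--Leibler divergence between the Bernoulli and Gaussian edge laws conditioned on the remaining data). Second, one must verify that the third-moment terms arising from the Bernoulli-to-Gaussian swap, which scale like the inverse square root of the local variance, remain summable after the $1/n$ normalization; this is where both tails of Assumption~\ref{assump:lin_deg} ($d_\ell \to \infty$ and $n - d_\ell \to \infty$) are genuinely needed to keep the Bernoulli parameter bounded away from $0$ and $1$ in the rescaled sense. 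I would also need uniform boundedness of the relevant conditional MMSE or derivative quantities, which follows from the whitened representation \eqref{eq:mu_white} guaranteeing $\MMSE(\bX\mid\cdot)\preceq \Id_{k-1}$, ensuring the interpolation error is controlled uniformly along the path.
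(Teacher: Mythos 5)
Your proposal is correct in outline and follows the same underlying route as the paper, but the paper's actual proof is far more economical: it does not carry out any Lindeberg analysis itself. The paper telescopes over the $L$ networks by defining the hybrid quantities $a_\ell = I(\bX; \bG_1, \dots, \bG_{\ell-1}, \bZ_\ell, \dots, \bZ_L)$, applies the triangle inequality, and then uses the chain rule together with the fact that $\bG_\ell$ and $\bZ_\ell$ depend on $\bX$ only through $\bW_\ell$ to reduce each increment to $I(\bW_\ell; \bG_\ell \mid \cD_\ell) - I(\bW_\ell; \bZ_\ell \mid \cD_\ell)$, at which point the single-network universality theorem \cite[Theorem~6]{rmv2019} is invoked as a black box (and the side information $\uY$ is absorbed afterwards via the argument of \cite[Corollary~7]{rmv2019}). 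What you identify as the ``technical heart'' --- the per-edge Gaussian-smoothing/Lindeberg estimate for mutual information, with third-moment control of order $1/\sqrt{d_\ell}$ per edge --- is precisely the content of that cited theorem, so your plan amounts to re-proving it from scratch; your moment accounting (per-edge error $o(n^{-1})$ over $O(n^2)$ edges, with both tails of Assumption~\ref{assump:lin_deg} keeping the rescaled Bernoulli parameter non-degenerate) is consistent with why it holds. Two refinements worth noting: first, your telescoping step should make the data-processing reduction from $\bX$ to $\bW_\ell$ explicit, since the universality statement is naturally about the low-rank matrix $\bW_\ell$ rather than $\bX$, and the conditioning set $\cD_\ell$ mixes Bernoulli and Gaussian observations; second, individual rescaled Bernoulli entries do \emph{not} converge to Gaussians when $d_\ell = o(n)$ (they remain highly skewed), so the replacement must be phrased at the level of the information functional, as your Lindeberg framing implicitly does, not entrywise convergence in distribution. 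Your approach buys self-containedness; the paper's buys brevity by leveraging prior work.
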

\begin{proof}
To simplify the expression,  we will prove the result without $\bY$. The result can then be extended to the setting with $\bY$ following the approach used in \cite[Corollary~7]{rmv2019}.

To proceed, let us define $a_1 = I(\bX; \uZ)$, $a_{L+1} = I(\bX ; \uG)$, and 
\begin{align*}
a_\ell = I(\bX; \bG_1, \dots, \bG_{\ell-1}, \bZ_{\ell}, \dots, \bZ_{L} ), 
\end{align*}
for $\ell = 2, \dots, L$. By the triangle inequality, we can then write
\begin{align*}
    \left| I(\bX; \uG) - I(\bX; \uZ) \right|  & = 
    \left| \sum_{\ell=1}^{L} a_{\ell + 1}  - a_\ell  \right|  \le  \sum_{\ell=1}^{L}\left| a_{\ell + 1}  - a_\ell  \right|.  
\end{align*}

Next, by the chain rule for mutual information one finds that
\begin{align*}
a_{\ell +1} - a_\ell & = I(\bX; \bG_\ell \mid \cD_\ell) -  I(\bX; \bZ_\ell \mid \cD_\ell)\\
&= I(\bW_\ell; \bG_\ell \mid \cD_\ell) -  I(\bW_\ell; \bZ_\ell \mid \cD_\ell),
\end{align*}
where $\cD_\ell = (\bG_1, \dots, \bG_{\ell-1}, \bZ_{\ell+1}, \dots \bZ_{L})$. 
Under the assumed distribution on $\bW_\ell$, we can apply \cite[Theorem~6]{rmv2019} to show that $\frac{1}{n} |a_{\ell + 1} - a_\ell|$ converges to zero in the large-$n$ limit. \end{proof}

The next step in our proof is to obtain an upper bound on $I(\bX ; \uY, \uZ)$. We define the function 
\begin{align}
    \cI(S,t) \overset{\Delta} =  \frac{1}{n} I(\bX;\tilde\uY,\uZ). \label{eqn:inf_func}
\end{align}
where we note that $\cI(S,0) = \cI(S)$ is the information function defined in \eqref{eqn:I_gen}.  The function $\cI(S,t)$ is concave and differentiable in $(S,t)$ with 

\begin{align}
    \nabla_S \cI(S,t) & = \frac{1}{2} \MMSE(\bX \mid \uY, \uZ).
\end{align}
The next result provides an upper bound on the partial derivative with respect to $t$.

\begin{lemma}\label{lem:grad_inq}
Under Assumption~\ref{assump:defR},
\begin{align}
   \partial_t \cI(S,t) & \le \frac{1}{4} \sum_{\ell=1}^L g_{\ell}(2 \nabla_S \cI(S,t))
\end{align}
where
\begin{align*}
  g_{\ell}(U)= \frac{1}{n^2}\gtr \left ({\ex{(R_\ell\bX^T\bX)^2}}\right) -   \gtr\left ( (R_\ell(I - U))^2\right ).
\end{align*}
\end{lemma}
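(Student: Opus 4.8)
The plan is to combine the scalar I-MMSE relation in the noise parameter $t$ with a convexity (Jensen) argument in which Assumption~\ref{assump:defR} plays the decisive role. First I would differentiate $\cI(S,t) = \frac1n I(\bX;\uY,\uZ)$ in $t$. Writing $I(\bX;\uY,\uZ) = I(\bX;\uY) + \sum_{\ell} I(\bX;\bZ_\ell\mid \uY,\bZ_1,\dots,\bZ_{\ell-1})$ and invoking the Markov chain $\bX \to \bW_\ell \to \bZ_\ell$, each conditional term equals $I(\bW_\ell;\bZ_\ell\mid\cdots)$. Applying the scalar I-MMSE relation entrywise to the Gaussian channel $\bZ_\ell = \sqrt{t}\,\bW_\ell + \bm\xi_\ell$ — whose off-diagonal entries carry unit noise variance while the diagonal entries carry variance $2$, precisely the normalization that makes the diagonal and off-diagonal contributions combine into a clean Frobenius norm — yields
\begin{align*}
\partial_t \cI(S,t) = \frac{1}{4n}\sum_{\ell=1}^L \ex{\| \bW_\ell - \ex{\bW_\ell\mid \uY,\uZ}\|_F^2}.
\end{align*}
It therefore suffices to bound each per-network term $\frac1n\,\ex{\|\bW_\ell - \ex{\bW_\ell\mid\uY,\uZ}\|_F^2}$ by $g_\ell(U)$, where $U = 2\nabla_S\cI(S,t) = \MMSE(\bX\mid\uY,\uZ)$.

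Next I would split this error as $\ex{\|\bW_\ell\|_F^2} - \ex{\|\ex{\bW_\ell\mid\uY,\uZ}\|_F^2}$. The first piece is exact: by the cyclic property of the trace, $\frac1n\ex{\|\bW_\ell\|_F^2} = \frac1{n^2}\gtr\big(\ex{(R_\ell\bX^T\bX)^2}\big)$, which is precisely the first term of $g_\ell(U)$. For the second piece I would introduce an independent sample $\bX'$ from the posterior of $\bX$ given $(\uY,\uZ)$ and use the identity $\ex{\|\ex{\bW_\ell\mid\uY,\uZ}\|_F^2} = \ex{\langle \bW_\ell(\bX),\bW_\ell(\bX')\rangle}$. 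Written in terms of the overlap $\bm O = \frac1n\bX^T\bX'$, a trace manipulation gives $\frac1n\ex{\|\ex{\bW_\ell\mid\uY,\uZ}\|_F^2} = \ex{\gtr(R_\ell\bm O R_\ell\bm O^T)}$. The whitening normalization $\ex{X_iX_i^T}=I$, together with the conditional independence of $\bX$ and $\bX'$, then yields $\ex{\bm O} = I - U$, which is symmetric.

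The final and decisive step is Jensen's inequality. Writing $\gtr(R_\ell\bm O R_\ell\bm O^T) = \vc(\bm O)^T (R_\ell\otimes R_\ell)\vc(\bm O)$, this map is convex in $\bm O$ exactly when $R_\ell\otimes R_\ell \succeq 0$; since the eigenvalues of $R_\ell\otimes R_\ell$ are the pairwise products of the eigenvalues of $R_\ell$, definiteness of $R_\ell$ (Assumption~\ref{assump:defR}) forces every such product to be nonnegative. Hence $\ex{\gtr(R_\ell\bm O R_\ell\bm O^T)} \ge \gtr(R_\ell\ex{\bm O}R_\ell\ex{\bm O}^T) = \gtr((R_\ell(I-U))^2)$, and subtracting this lower bound produces the claimed upper bound $\frac1n\ex{\|\bW_\ell - \ex{\bW_\ell\mid\uY,\uZ}\|_F^2} \le g_\ell(U)$; summing over $\ell$ completes the proof. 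I expect the convexity step to be the crux: it is the only place Assumption~\ref{assump:defR} is used, and without definiteness the Kronecker product $R_\ell\otimes R_\ell$ acquires negative eigenvalues, so the quadratic form is no longer convex and the Jensen bound fails.
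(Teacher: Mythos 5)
Your proof is correct and follows essentially the same route as the paper, which simply delegates the substantive steps you reconstruct (the conditional I-MMSE identity in $t$, the replica/overlap decomposition of $\frac{1}{n}\ex{\|\bW_\ell - \ex{\bW_\ell \mid \uY,\uZ}\|_F^2}$ with $\ex{\bm{O}} = I - U$, and the Jensen step resting on $R_\ell \otimes R_\ell \succeq 0$ under Assumption~\ref{assump:defR}) to Appendix~D and Lemma~11 of \cite{rmv2019}. One minor point worth adopting from the paper: it makes the differentiation rigorous by assigning each channel its own parameter $t_\ell$ and applying the calculus chain rule, whereas in your chain-rule expansion $I(\bX;\uY) + \sum_\ell I(\bX;\bZ_\ell \mid \uY, \bZ_{<\ell})$ the conditioning variables $\bZ_{<\ell}$ also depend on $t$, so term-by-term differentiation in each channel's own noise is not literally valid --- though the total-derivative formula you display, with each MMSE conditioned on all of $(\uY,\uZ)$, is exactly right.
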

\begin{proof}
Suppose that each observation $\bZ_\ell$ has a separate parameters $t_\ell$. By the chain rule for differentiation, we can then write
\begin{align}
  \partial_t \cI(S,t)  & = \sum_{\ell=1}^L \partial_{t_\ell} \frac{1}{n} I(\bX; \uY, \uZ) \Big \vert_{t_1 = \dots t_L = t}.
\end{align}
Furthermore, by the chain rule for mutual information and the fact that $\bZ_\ell$ is conditionally independent of everything else given $\bW_\ell$, we have

\begin{align*}
  \partial_{t_\ell}I(\bX; \uY, \uZ)= \partial_{t_\ell}  I(\bW_\ell ; \bZ_\ell \mid \uY,  \bZ_{\sim \ell} ), 
\end{align*}
where the subscript $\sim \ell$ means that the $\ell$-th term is omitted.

Following the steps outlined in  outlined in  \cite[Appdendix~D]{rmv2019} and the proof of \cite[Lemma~11]{rmv2019}, one finds that the $\partial_{t_\ell}\cI(S,t)\le \frac{1}{4} g_\ell(\nabla_S \cI(S,t))$. Plugging this inequality back into the expression above completes the proof. 
\end{proof}

Having established Lemma~\ref{lem:grad_inq}, the rest of the proof follows similarly to the proof of Theorem~8 in~\cite{rmv2019}. Specifically, we obtain
\begin{align}
\cI(S,1) & \le \min_{U \in \cU}\left\{\cI^*(U)  + \frac{1}{2} \gtr( S U) + \frac{1}{4} \sum_{\ell=1}^L g_{\ell}(U)   \right\}, \label{eq:conj_UB}
\end{align}
where
\begin{align}
\cI^*(U) & = \sup_{S \succeq 0} \left\{ \cI(S) - \frac{1}{2} \gtr( S U) \right\}
\end{align}
is the convex conjugate of $\cI(S)$. 

For the final step in the proof, observe that
\begin{align*}
    g_\ell(U) =  \delta_\ell  + 2 \gtr( R_\ell^2 U ) -    \gtr\left ( (R_\ell U)^2\right ),
\end{align*}
where  $\delta_\ell  =\frac{1}{n^2}\gtr \left ({\ex{(R_\ell(\bX^T\bX - I))^2}}\right)$. 
For all $\tilde{U} \in \cU$, the inequality 
\begin{align*}
- \gtr( (R_\ell U)^2)
&\le    -2 \gtr(R_\ell U R_\ell \tilde{U})  + \gtr( (R_\ell \tilde U)^2),
\end{align*}
leads to 
\begin{align*}
    g_\ell(U) \le  \delta_\ell    + 2 \gtr( R (I - \tilde{U}) R U)     + \gtr( (R_\ell \tilde{U})^2).
\end{align*}
Combining this inequality with \eqref{eq:conj_UB}, we see that, for all $U, \tilde{U}$ in $\cU$,
\begin{align}
\cI(S,1) & \le  \cI^*(U)  + \frac{1}{2} \sum_{\ell=1}^L\gtr( (S  + R_\ell(I-\tilde{U})R_\ell) U) 
+ \frac{1}{4} \sum_{\ell=1}^L \gtr((R_\ell \tilde{U})^2) +  \frac{1}{4} \sum_{\ell=1}^L  \delta_\ell . \label{eq:I_U_Ut}
\end{align}
The minimum of the first two terms with respect to $U$ then leads to 
\begin{align*}
\min_{U \in \cU} \left\{  \cI^*(U)  + \frac{1}{2} \sum_{\ell=1}^L \gtr( (S  + R_\ell(I-\tilde{U})R_\ell) U) \right\} 
& = \cI\Big( S  + \sum_{\ell = 1}^L R_\ell(I-\tilde{U})R_\ell \Big).
\end{align*}
where we have used the fact that $\cI(S)$ is concave, and thus equal to its biconjugate. Plugging this expression back into \eqref{eq:I_U_Ut} and then taking the the minimum with respect to $\tilde{U}$ yields,
\begin{align} \label{eq:cI_UB_c}
    \cI(S,1) \le \min_{\tilde{U} \in \cU}   \cF(\tilde U) +  \frac{1}{4} \sum_{\ell=1}^L  \delta_\ell.
\end{align}
Under the assumed distribution on $\bX$ each term $\delta_\ell$ vanishes in the large-$n$ limit. Combining \eqref{eq:cI_UB_c} with Lemma~\ref{lem:universality} completes the proof of  Theorem~\ref{thm:I_UB}.

\bibliography{all} 

\begin{thebibliography}{10}

\bibitem{holland1983stochastic}
P.~W. Holland, K.~B. Laskey, and S.~Leinhardt, ``Stochastic blockmodels: First
  steps,'' {\em Social networks}, vol.~5, no.~2, pp.~109--137, 1983.

\bibitem{decelle2011inference}
A.~Decelle, F.~Krzakala, C.~Moore, and L.~Zdeborov{\'a}, ``Inference and phase
  transitions in the detection of modules in sparse networks,'' {\em Physical
  Review Letters}, vol.~107, no.~6, 2011.

\bibitem{deshpande:2017}
Y.~Deshpande, E.~Abbe, and A.~Montanari, ``Asymptotic mutual information for
  the balanced binary stochastic block model,'' {\em Information and Inference:
  A Journal of the IMA}, vol.~6, no.~2, pp.~125--170, 2016.

\bibitem{caltagirone:2018}
F.~Caltagirone, M.~Lelarge, and L.~Miolane, ``Recovering asymmetric communities
  in the stochastic block model,'' {\em IEEE Transactions on Network Science
  and Engineering}, vol.~5, no.~3, pp.~237--246, 2018.

\bibitem{lelarge:2018}
M.~Lelarge and L.~Miolane, ``Fundamental limits of symmetric low-rank matrix
  estimation,'' {\em Probability Theory and Related Fields}, 2018.

\bibitem{barbier:2016a}
J.~Barbier, M.~Dia, N.~Macris, F.~Krzakala, T.~Lesieur, and L.~Zdeborov\'a,
  ``Mutual information for symmetric rank-one matrix estimation: {A} proof of
  the replica formula,'' in {\em Advances in Neural Information Processing
  Systems}, vol.~29, (Barcelona, Spain), pp.~424--432, 2016.

\bibitem{lesieur:2017}
T.~Lesieur, F.~Krzakala, and L.~Zdeborov\'a, ``Constrained low-rank matrix
  estimation: {P}hase transitions, approximate message passing and
  applications,'' {\em Journal of Statistical Mechanics: Theory and
  Experiment}, July 2017.

\bibitem{krzakala:2016}
F.~Krzakala, J.~Xu, and L.~Zdeborov{\'a}, ``Mutual information in rank-one
  matrix estimation,'' in {\em 2016 IEEE Information Theory Workshop (ITW)},
  pp.~71--75, IEEE, 2016.

\bibitem{deshpande:2018}
Y.~Deshpande, A.~Montanari, E.~Mossel, and S.~Sen, ``Contextual stochastic
  block models,'' in {\em NeurIPS}, 2018.

\bibitem{abbe:2018a}
E.~Abbe and C.~Sandon, ``Proof of the achievability conjectures for the general
  stochastic block model,'' {\em Communications on Pure and Applied
  Mathematics}, vol.~71, no.~7, pp.~1334--1406, 2018.

\bibitem{banks:2016a}
J.~Banks, C.~Moore, J.~Neeman, and P.~Netrapalli, ``Information-theoretic
  thresholds for community detection in sparse networks,'' in {\em Conference
  On Learning Theory}, 2016.

\bibitem{abbe:2018}
E.~Abbe, ``Community detection and stochastic block models: {R}ecent
  developments,'' {\em Journal of Machine Learning Research}, vol.~18, no.~177,
  pp.~1--86, 2018.

\bibitem{rmv2019}
G.~Reeves, V.~Mayya, and A.~Volfovsky, ``The geometry of community detection
  via the mmse matrix,'' {\em arXiv preprint arXiv:1907.02496}, 2019.

\bibitem{reeves2018mutual}
G.~Reeves, H.~D. Pfister, and A.~Dytso, ``Mutual information as a function of
  matrix snr for linear gaussian channels,'' in {\em 2018 IEEE International
  Symposium on Information Theory (ISIT)}, pp.~1754--1758, IEEE, 2018.

\bibitem{abbe2015detection}
E.~Abbe and C.~Sandon, ``Detection in the stochastic block model with multiple
  clusters: proof of the achievability conjectures, acyclic bp, and the
  information-computation gap,'' {\em arXiv preprint arXiv:1512.09080}, 2015.

\bibitem{moore2017computer}
C.~Moore, ``The computer science and physics of community detection:
  Landscapes, phase transitions, and hardness,'' {\em arXiv preprint
  arXiv:1702.00467}, 2017.

\bibitem{abbe2017community}
E.~Abbe, ``Community detection and stochastic block models: recent
  developments,'' {\em The Journal of Machine Learning Research}, vol.~18,
  no.~1, pp.~6446--6531, 2017.

\bibitem{bordenave2015non}
C.~Bordenave, M.~Lelarge, and L.~Massouli{\'e}, ``Non-backtracking spectrum of
  random graphs: community detection and non-regular ramanujan graphs,'' in
  {\em Foundations of Computer Science (FOCS), 2015 IEEE 56th Annual Symposium
  on}, pp.~1347--1357, IEEE, 2015.

\bibitem{krzakala2013spectral}
F.~Krzakala, C.~Moore, E.~Mossel, J.~Neeman, A.~Sly, L.~Zdeborov{\'a}, and
  P.~Zhang, ``Spectral redemption in clustering sparse networks,'' {\em
  Proceedings of the National Academy of Sciences}, vol.~110, no.~52,
  pp.~20935--20940, 2013.

\bibitem{coja2018information}
A.~Coja-Oghlan, F.~Krzakala, W.~Perkins, and L.~Zdeborova,
  ``Information-theoretic thresholds from the cavity method,'' {\em Advances in
  Mathematics}, vol.~333, pp.~694--795, 2018.

\bibitem{newman2016structure}
M.~E. Newman and A.~Clauset, ``Structure and inference in annotated networks,''
  {\em Nature communications}, vol.~7, p.~11863, 2016.

\bibitem{cai2016inference}
T.~T. Cai, T.~Liang, and A.~Rakhlin, ``Inference via message passing on
  partially labeled stochastic block models,'' {\em arXiv preprint
  arXiv:1603.06923}, 2016.

\bibitem{stegehuis2019efficient}
C.~Stegehuis and L.~Massouli{\'e}, ``Efficient inference in stochastic block
  models with vertex labels,'' {\em IEEE Transactions on Network Science and
  Engineering}, 2019.

\bibitem{kanade2016global}
V.~Kanade, E.~Mossel, and T.~Schramm, ``Global and local information in
  clustering labeled block models,'' {\em IEEE Transactions on Information
  Theory}, vol.~62, no.~10, pp.~5906--5917, 2016.

\bibitem{mossel2016local}
E.~Mossel and J.~Xu, ``Local algorithms for block models with side
  information,'' in {\em Proceedings of the 2016 ACM Conference on Innovations
  in Theoretical Computer Science}, pp.~71--80, ACM, 2016.

\bibitem{ver2014phase}
G.~Ver~Steeg, C.~Moore, A.~Galstyan, and A.~Allahverdyan, ``Phase transitions
  in community detection: A solvable toy model,'' {\em EPL (Europhysics
  Letters)}, vol.~106, no.~4, p.~48004, 2014.

\bibitem{zhang2014phase}
P.~Zhang, C.~Moore, and L.~Zdeborov{\'a}, ``Phase transitions in semisupervised
  clustering of sparse networks,'' {\em Physical Review E}, vol.~90, no.~5,
  p.~052802, 2014.

\bibitem{binkiewicz2017covariate}
N.~Binkiewicz, J.~T. Vogelstein, and K.~Rohe, ``Covariate-assisted spectral
  clustering,'' {\em Biometrika}, vol.~104, no.~2, pp.~361--377, 2017.

\bibitem{saad2018community}
H.~Saad and A.~Nosratinia, ``Community detection with side information: Exact
  recovery under the stochastic block model,'' {\em IEEE Journal of Selected
  Topics in Signal Processing}, vol.~12, no.~5, pp.~944--958, 2018.

\bibitem{deshpande2018contextual}
Y.~Deshpande, S.~Sen, A.~Montanari, and E.~Mossel, ``Contextual stochastic
  block models,'' in {\em Advances in Neural Information Processing Systems},
  pp.~8581--8593, 2018.

\bibitem{levin2017central}
K.~Levin, A.~Athreya, M.~Tang, V.~Lyzinski, and C.~E. Priebe, ``A central limit
  theorem for an omnibus embedding of multiple random dot product graphs,'' in
  {\em 2017 IEEE International Conference on Data Mining Workshops (ICDMW)},
  pp.~964--967, IEEE, 2017.

\bibitem{arroyo2019inference}
J.~Arroyo, A.~Athreya, J.~Cape, G.~Chen, C.~E. Priebe, and J.~T. Vogelstein,
  ``Inference for multiple heterogeneous networks with a common invariant
  subspace,'' {\em arXiv preprint arXiv:1906.10026}, 2019.

\bibitem{mmvr2019}
H.~Mathews, V.~Mayya, A.~Volfovsky, and G.~Reeves, ``Gaussian mixture models
  for stochastic block models with non-vanishing noise,'' in {\em 2019 IEEE
  International Workshop on Computational Advances in Multi-Sensor Adaptive
  Processing (CAMSAP)}, IEEE, 2019.
\newblock To appear.

\end{thebibliography}
\bibliographystyle{ieeetr}
\end{document}